\newtheorem{theorem}{Theorem}
\newtheorem{lemma}{Lemma}
\newtheorem{corollary}{Corollary}
\newcommand{\etal}{{et~al.}}
\newcommand{\ie}{{i.e.}}
\newcommand{\eg}{{e.g.}}
\newcommand{\opt}{\textsf{OPT}}
\newcommand{\conv}{{\rm conv}}
\def\Ext{{\tt Ext}}
\newcommand{\RR}{\mathbb{R}} 
\newcommand{\eps}{\varepsilon}
\def\Prob{{\rm Prob}}
\newcommand{\e}{\mathrm{e}}
\newcommand{\later}[1]{}
\newcommand{\old}[1]{}
\title{Two-sided convexity testing with certificates
  \footnote{A preliminary version appears in 
Proceedings of the 12th Japanese-Hungarian Symposium
  on Discrete Mathematics and Its Applications, March 2023, Budapest, Hungary.
  The current expanded version corrects some inaccuracies present there.}}
\author{
Adrian Dumitrescu\thanks{%
Algoresearch L.L.C., Milwaukee, WI, USA. 
Email~\texttt{ad.dumitrescu@algoresearch.org}.}}
\begin{document}

\maketitle

\begin{abstract}
  We revisit the problem of property testing for convex position for point sets in $\RR^d$.
  Our results draw from previous ideas of Czumaj, Sohler, and Ziegler (ESA 2000).
  First, the algorithm is redesigned and its analysis is revised for correctness.
  Second, its functionality is expanded by (i)~exhibiting both negative and positive certificates
  along with the convexity determination, and (ii)~significantly extending the input range
  for moderate and higher dimensions. 
  
  The behavior of the randomized tester is as follows: (i)~if $P$ is in convex position, it accepts;
  (ii)~if $P$ is far from convex position, with probability at least $2/3$, it rejects and outputs a
  $(d+2)$-point witness of non-convexity as a negative certificate; (iii)~if $P$ is close to convex position,
  with probability at least $2/3$, it accepts and outputs an approximation of the largest subset in convex position.
  The algorithm examines a sublinear number of points and runs in subquadratic time for every
  fixed dimension $d$.

  \smallskip
  \textbf{\small Keywords}: property testing, convex position,
  approximation algorithm, randomized algorithm.
\end{abstract}

\section{Introduction} \label{sec:intro}

A set of points in the $d$-dimensional space $\RR^d$ is said to be:
(i)~in \emph{general position} if any at most $d+1$ points are \emph{affinely independent}; and
(ii)~in \emph{convex position} if none of the points lies in the convex hull of the other points.
It is known that every set of $n$ points in general position in the plane contains $(1-o(1))\log{n}$
points in convex position, and this bound is tight up to lower-order terms~\cite{ES35,Suk17}. 
For $d\geq 3$, by the Erd\H{o}s--Szekeres theorem, every set of $n$ points in general position
in $\mathbb{R}^d$ contains $\Omega(\log n)$ points in convex position: it suffices to find points
whose projections onto a generic plane are in convex position.
On the other hand, for every fixed $d\geq 2$,
K{\'a}rolyi and Valtr~\cite{KV03} and Valtr~\cite{Va92} constructed  $n$-element
sets in general position in $\mathbb{R}^d$ in which no more than $O(\log^{d-1}n)$ points 
are in convex position. A recent result of Pohoata and Zakharov~\cite{PZ22} shows
that a set of $n$ points in $\RR^d$, $d \geq 3$, already contains a subset of $\omega(\log{n})$
points in convex position. 

Given a point set in general position in $\RR^d$, the problem of computing a maximum-size
subset in convex position can be solved in polynomial time for $d=2$ by the dynamic programming
algorithm of Chv\'atal and Klincsek~\cite{CK80}; their algorithm runs in $O(n^3)$ time.
In contrast, the general problem in $\RR^d$ was shown to be $\NP$-complete  for every $d \geq 3$
by Giannopoulos, Knauer, and Werner~\cite{GKW13}, and moreover, no approximation algorithm is known. 

Throughout this paper we assume (in a standard fashion) that the input set is in \emph{general position}.  
For Theorems~\ref{thm:chazelle} and~\ref{thm:chan} and Corollary~\ref{cor:chan},
let $P$ be a set of $n$ points in $\RR^d$, where $d$ is considered constant.

The complexity of computing the convex hull of $n$ points in $\RR^d$ is summarized in the following 
result of Chazelle; see also~\cite{BCK+08,Sei17}. 

\begin{theorem} {\rm (Chazelle~\cite{Chaz93})} \label{thm:chazelle}
Given $P$, the convex hull of $P$ can be computed in $O(n \log{n} + n^{\lfloor d/2 \rfloor})$ time
  using $O(n^{\lfloor d/2 \rfloor})$ space, which is asymptotically worst-case optimal.
\end{theorem}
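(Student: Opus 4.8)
\emph{Overview.} The plan is to prove the algorithmic upper bound and the matching worst-case lower bound separately. Throughout I pass to the dual via point--hyperplane polarity, so that computing the convex hull of $P$ becomes computing the common intersection of $n$ halfspaces in $\RR^d$ --- equivalently, the full face lattice of that polyhedron. This reformulation is convenient because the recursion below is most naturally phrased in terms of the arrangement of the $n$ bounding hyperplanes. For context, recall that the randomized incremental construction of Clarkson and Shor already achieves the stated time in expectation; the content of the theorem is a deterministic algorithm of the same efficiency, together with optimality.

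\emph{Lower bound.} For the $n^{\lfloor d/2 \rfloor}$ term I would invoke the Upper Bound Theorem: a $d$-polytope with $n$ vertices has $\O(n^{\lfloor d/2 \rfloor})$ faces, and the cyclic polytopes --- convex hulls of $n$ points on the moment curve $t \mapsto (t, t^2, \dots, t^d)$ --- attain this bound. Hence for infinitely many inputs the output alone has size $\Theta(n^{\lfloor d/2 \rfloor})$, forcing that much time (and space) for any algorithm that writes it down. For the $n \log n$ term I would reduce sorting to the planar case: sorting reals $x_1, \dots, x_n$ is equivalent to computing the hull of the points $(x_i, x_i^2)$, which all lie on a parabola and hence all appear, in sorted order, on the hull, so in the algebraic decision-tree model this forces $\Omega(n \log n)$; the planar instance embeds into $\RR^d$ for every $d \ge 2$. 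Taking the larger of the two bounds matches the claimed running time.

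\emph{Upper bound.} Here I would run a deterministic divide-and-conquer on the arrangement of the $n$ bounding hyperplanes, with derandomized cuttings as the engine: for a parameter $r$, a set of $n$ hyperplanes admits a $(1/r)$-cutting, i.e.\ a subdivision of $\RR^d$ into $\O(r^d)$ simplices each met by at most $n/r$ of the hyperplanes, which can be built deterministically --- together with the conflict list of each simplex --- in $\O(n r^{d-1})$ time. For every simplex $\sigma$ of the cutting, recurse on $\sigma$'s conflict list to obtain the polyhedron clipped to $\sigma$, then stitch the clipped pieces along shared simplex boundaries. A single level of this already yields time $\O(n^{\lfloor d/2 \rfloor + \delta})$ for every fixed $\delta > 0$; to remove the $n^\delta$ slack I would replace the cutting by a \emph{hierarchical} cutting (a logarithmic tower of cuttings with geometrically growing $r$, so that each hyperplane lies in few conflict lists overall) and amortize the cost of the merge steps against features of the final answer, following Chazelle. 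One must additionally process the simplices one at a time and discard conflict lists as soon as they are consumed, so that the working space never exceeds the $\O(n^{\lfloor d/2 \rfloor})$ inherent in the output.

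\emph{Main obstacle.} The genuinely difficult part is the derandomization combined with the elimination of the extra $n^\delta$ factor: one has to show that the hierarchical cutting can be constructed deterministically within budget, that conflict lists shrink quickly enough down the hierarchy, and that the total merge cost over all $\O(r^d)$ sub-polyhedra can be charged so that the geometric series of recursive costs sums to $\O(n \log n + n^{\lfloor d/2 \rfloor})$ rather than to that quantity times $n^\delta$ or a polylogarithmic factor. Establishing the invariants behind this amortization --- in essence, that no feature of the arrangement is revisited too many times across the recursion --- is where the bulk of the technical work lies.
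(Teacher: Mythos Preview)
The paper does not prove this theorem at all: it is quoted from Chazelle~\cite{Chaz93} purely as background and is used only to contextualize the convex-hull complexity before moving on to Chan's output-sensitive result and the property tester. There is therefore no ``paper's own proof'' to compare against.

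Your sketch is a reasonable high-level outline of Chazelle's original argument (polarity to halfspace intersection, deterministic $(1/r)$-cuttings, hierarchical cuttings to kill the $n^\delta$ slack, and the Upper Bound Theorem plus sorting for optimality), and you correctly identify the amortization over the cutting hierarchy as the crux. But none of this is needed or expected here: for the purposes of the present paper you should simply cite the theorem and move on, exactly as the authors do.
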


It is known that the number of faces, $f$, of the output polytope is $\Theta(n^{\lfloor d/2 \rfloor})$
in the worst case~\cite{McM70}, \ie, exponential in $d$. On the other hand, a result of Chan shows that
the set of extreme points of a set of $n$ points in $\RR^d$ can be computed in subquadratic time
and essentially faster when their number $h$ is small.

\begin{theorem} {\rm (Chan~\cite{Chan96})} \label{thm:chan}
  Given $P$, the $h$ extreme points of $P$ can be computed in time \linebreak
\begin{equation} \label{eq:chan}
  T(n,h) = O\left(n \log^{O(1)}{h} + (nh)^{\frac{\lfloor d/2 \rfloor}{\lfloor d/2 \rfloor +1}} \log^{O(1)}{n}\right).
\end{equation}
\end{theorem}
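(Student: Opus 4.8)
The plan is to first reduce to the case where a constant-factor estimate of $h$ is available, and then to run a grouping-plus-sampling scheme in the spirit of Chan's output-sensitive convex-hull algorithm. For the reduction, run the algorithm with successive guesses $m = 2^{2^{0}}, 2^{2^{1}}, 2^{2^{2}}, \ldots$ for an upper bound on $h$, aborting a run as soon as it would certify that more than $m$ points are extreme; since the costs of the runs form a geometric-type series dominated by the last (successful) one, this loses only a constant factor, and henceforth we may assume $h \le m \le 2h$. It is also worth recording at the outset that deciding whether a single $p \in P$ is extreme is a linear-programming feasibility question in $d$ variables — is there a hyperplane through $p$ with $P \setminus \{p\}$ strictly on one side? — solvable in $\O(n)$ time by fixed-dimension LP; doing this for all $n$ points gives the trivial $\O(n^2)$ bound, and the whole task is to beat it by not testing points one at a time against the full set.

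The main step is the grouping. Partition $P$ arbitrarily into $\lceil n/m \rceil$ groups of size at most $m$ and compute the convex hull of each group via Theorem~\ref{thm:chazelle}, in total time $\O\!\big((n/m)(m\log m + m^{\lfloor d/2\rfloor})\big) = \O(n\log m + n\,m^{\lfloor d/2\rfloor - 1})$, equipping each group hull with a Dobkin--Kirkpatrick-type hierarchy so that tangent / extreme-in-a-direction queries against it cost $\O(\log m)$. Since an extreme point of $P$ is an extreme point of the union of the group hulls, it suffices to "wrap'' around that union: each wrapping step makes one query per group, at cost $\O((n/m)\log m)$, and produces a new vertex or facet. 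Done naively this costs $\O(f\cdot (n/m)\log m)$ with $f = \Theta(h^{\lfloor d/2\rfloor})$ the number of facets of $\conv(P)$, which is far too much. To charge only to the $h$ vertices and never to the full face structure, interleave a Clarkson--Shor pruning: draw a random sample $R$ of $r$ points, compute $\conv(R)$, discard every point of $P$ strictly inside it (those are certainly not extreme), and recurse inside the $\O(r^{\lfloor d/2\rfloor})$ regions "beyond'' the facets of $\conv(R)$; by the Clarkson--Shor bound the surviving point counts satisfy $\sum_F |P_F| = \O(n)$, so the recursion telescopes. Balancing $m$ (and $r$) so that the preprocessing term $n\,m^{\lfloor d/2\rfloor-1}$ is traded off against the query term — subject to only $\O(h)$-type effective wrapping rounds surviving the pruning — produces the exponent $\tfrac{\lfloor d/2\rfloor}{\lfloor d/2\rfloor+1}$ in $(nh)^{\lfloor d/2\rfloor/(\lfloor d/2\rfloor+1)}$, while the $n\log^{\O(1)}h$ term absorbs the regime where $h$ is so small that the linear scan dominates.

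I expect the crux to be exactly this charging step: ensuring that the sampling/pruning recursion genuinely caps the number of wrapping rounds (equivalently, the number of facets ever touched) by a polynomial in $h$ rather than in $f = \Theta(n^{\lfloor d/2\rfloor})$, so that one never pays for the possibly enormous face lattice of $\conv(P)$. Secondary technical matters are checking the geometric-series accounting for the exponential search on $h$, and, if a worst-case deterministic time is wanted, derandomizing the random sample via $\eps$-nets or $\eps$-approximations — which is consistent with the extra $\log^{\O(1)}n$ factors already present in~\eqref{eq:chan}.
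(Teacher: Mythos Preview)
The paper does not prove Theorem~\ref{thm:chan}; it is quoted from Chan~\cite{Chan96} and used throughout as a black box (Corollary~\ref{cor:chan} and the running-time analyses of Algorithms \texttt{Convex-} and \texttt{Convex+} simply plug into~\eqref{eq:chan}). There is therefore no in-paper argument to compare your sketch against.

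Judged on its own, your outline has the right high-level ingredients --- doubly-exponential search on $h$, grouping, and random-sample pruning --- but one step would not go through as written in general dimension. The Dobkin--Kirkpatrick hierarchy gives $\O(\log m)$-time tangent/extreme-direction queries only for $d \le 3$; for $d \ge 4$ no near-linear-space structure with polylogarithmic query time is known for ray shooting into a convex $m$-vertex polytope, so the gift-wrapping-over-groups scheme you describe cannot be costed at $\O((n/m)\log m)$ per step. In Chan's paper the exponent $\lfloor d/2\rfloor / (\lfloor d/2\rfloor + 1)$ arises instead from Matou\v{s}ek's space/query-time trade-off for LP and ray-shooting queries: each candidate point is tested for extremeness by a single LP-type query against the preprocessed groups, and the sampling/pruning step is what cuts the number of candidates needing the expensive test from $n$ down to roughly $h$, yielding the $(nh)^{\cdots}$ rather than an $(n^{2})^{\cdots}$ dependence. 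Your identification of the crux --- never paying for the $\Theta(h^{\lfloor d/2\rfloor})$ facet count --- is exactly right, but the mechanism that achieves it is per-point LP testing after pruning, not wrapping.
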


Taking $n=h$ in the above expression yields a time that suffices for testing whether a set of $n$ points
is in convex position. From the other direction, it is conjectured that the problem of testing
whether a set $P$ is in convex position is asymptotically as hard as the problem of computing
all extreme points of $P$~\cite{CSZ-esa00}.

\begin{corollary} {\rm (Chan~\cite{Chan96})} \label{cor:chan}
  Given $P$, determining whether $P$ is in convex position can be done in time
  $T(n,n) = O\left(n^{\frac{2\lfloor d/2 \rfloor}{\lfloor d/2 \rfloor +1}} \log^{O(1)}{n}\right)$.
\end{corollary}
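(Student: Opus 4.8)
The plan is to derive this as an immediate consequence of Theorem~\ref{thm:chan}. The reduction is trivial once one notices that convex position is precisely the statement that no point is redundant: for $p\in P$, we have $p\in\conv(P\setminus\{p\})$ if and only if $p$ is not an extreme point of $P$. Hence $P$ is in convex position if and only if all $n$ of its points are extreme, \ie, if and only if the number $h$ of extreme points equals $n$. So the algorithm is simply to run Chan's extreme-points procedure of Theorem~\ref{thm:chan}, obtain the set of extreme points and its size $h$, and report ``yes'' exactly when $h=n$.

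For the running time I would invoke Theorem~\ref{thm:chan} directly: the $h$ extreme points are computed in time $T(n,h)$ as in~\eqref{eq:chan} (the procedure is output-sensitive, so $h$ need not be supplied in advance). It remains only to bound this by a quantity independent of $h$. Since $h\le n$ and the right-hand side of~\eqref{eq:chan} is nondecreasing in $h$, the cost is at most
\[
  T(n,n)=\O\!\left(n\log^{\O(1)}n+(n^{2})^{\frac{\lfloor d/2\rfloor}{\lfloor d/2\rfloor+1}}\log^{\O(1)}n\right).
\]
For every $d\ge 2$ one has $\lfloor d/2\rfloor\ge 1$, so the exponent $\frac{2\lfloor d/2\rfloor}{\lfloor d/2\rfloor+1}$ is at least $1$; hence the second term absorbs the first up to the polylogarithmic factor, and the bound collapses to $\O\!\bigl(n^{\frac{2\lfloor d/2\rfloor}{\lfloor d/2\rfloor+1}}\log^{\O(1)}n\bigr)$, as claimed.

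I do not anticipate any real obstacle: the corollary is essentially a one-line deduction from Theorem~\ref{thm:chan}, and the only things to be mildly careful about are the output-sensitivity of Chan's algorithm (so that $h$ need not be known in advance) and the monotonicity of $T(n,h)$ in $h$ used above. As a remark pointing toward later sections, when $h<n$ the computation pins down a non-extreme point $p\in P$, and then an auxiliary step --- Carath\'eodory pruning, or a single linear program --- yields $d+1$ points of $P$ whose simplex contains $p$; together with $p$ this gives a $(d+2)$-point witness of non-convexity.
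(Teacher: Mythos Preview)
Your proposal is correct and matches the paper's own justification, which is simply the sentence preceding the corollary: ``Taking $n=h$ in the above expression yields a time that suffices for testing whether a set of $n$ points is in convex position.'' You have merely fleshed out this one-line observation with the (obvious) equivalence between convex position and $h=n$, the monotonicity of $T(n,h)$ in $h$, and the absorption of the $n\log^{\O(1)}n$ term---all of which are implicit in the paper's treatment.
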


For instance, the running time in Corollary~\ref{cor:chan} is
$O\left(n \log^{O(1)}{n} \right)$ for $d=2,3$,
$O\left(n^{4/3} \log^{O(1)}{n} \right)$ for $d=4,5$, 
$O\left(n^{3/2} \log^{O(1)}{n} \right)$ for $d=6,7$,
and subquadratic in any fixed dimension $d$.  

Similarly, the following holds (see Corollary 3.4 in~\cite{Chan96}).

\begin{corollary} {\rm (Chan~\cite{Chan96})} \label{cor:chan2}
  Given $P$ and $S \subset P$, where $s=|S|$,
  determining whether all points in $S$ are extreme in $P$ can be done in time
  $T(n,s) = O\left(n \log^{O(1)}{s} + (ns)^{\frac{\lfloor d/2 \rfloor}{\lfloor d/2 \rfloor +1}} \log^{O(1)}{n}\right)$.
\end{corollary}

In \emph{property testing} one is concerned with the design of faster algorithms for approximate
decision making~\cite{Gol17}. In this scenario, instead of determining whether an input has a specific property,
one determines if the input is \emph{far} or perhaps \emph{close} from satisfying that property.
Such approximate decisions, usually involving random sampling or shortcuts in the computation,
may be valuable in settings in which an exact decision is infeasible or just more expensive.
For example, one may be interested in determining, given an input point set, how far it stands from being in
convex position without needing to spend all resources that would be required for computing the convex hull
of the respective set. Such a tool is obviously useful in the general area of testing properties of geometric objects
and visual images for distinguishing a convex shape among other shapes. 

The goal of \emph{property testing} is to develop efficient \emph{property testers}.
Ideally, such a tester makes a sublinear number of queries of the input set, \ie, it does not look at the entire input. 
However, this does not mean --- even for the ideal case --- that the tester runs in time that is sublinear in the size
of the input; in fact, it often doesn't.  
Moreover, if the tester is  also required to return a possibly large subset of the input set (depending on the outcome)
as a certificate, then its time requirements may be further increased. 

Here we focus on the testing of \emph{convex position}.
As in the context of randomized algorithms, approximately deciding means returning the correct answer
with some confidence, specifically with probability at least $2/3$ as described below, see, \eg, \cite{MU17};
however, the $2/3$ threshold is not set in stone. 

Testing algorithms may use samples of different sizes. Some intuition is as follows.
Suppose that the input is far from convex position; the algorithm is
likely to reject on large samples (the larger the sample, the easier it will be to find that out),
and is likely to accept on small samples (the smaller the sample, the easier the algorithm will be fooled).
On the other hand, if the input is close to convex position,
the smaller the sample, the easier it will be for the algorithm to accept.

A key distinction with regard to the action (accept or reject) is that closeness must fit the goal,
\ie, far and close need to be quantified appropriately.
As it turns out, rejecting an input that is far from convex position is relatively insensitive to the distance from
convex position. However, when accepting an input that is close to convex position, the input must be really close.

\subsection{Preliminaries} \label{subsec:prelim}

\paragraph{Definitions and notation.}
Let $0 <\eps< 1/2$. A set $P$ of $n$ points is $\eps$-\emph{far} from convex position
if there is no set $X \subset P$ of size at most $\eps n$ such that $P \setminus X$ is in convex position. 
Otherwise, \ie, if there is a set $X \subset P$ of size at most $\eps n$ such that $P \setminus X$ is in convex position,
$P$ is $\eps$-\emph{close} to convex position. See Fig.~\ref{fig:examples}.
For a set point $P$, let $\Ext(P)$ denote the set of extreme points of $P$.

\begin{figure}[htbp]
 \centering \includegraphics[scale=0.8]{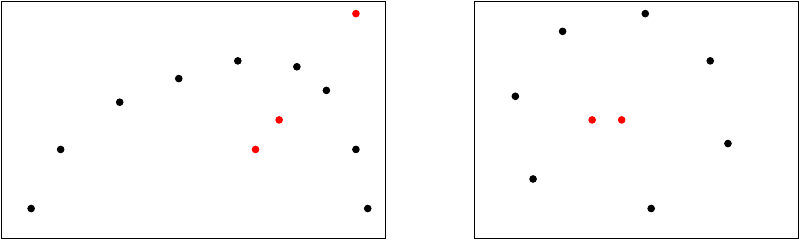}
 \caption{A $12$-point set that is $1/4$-close to convex position (left), and
 a $9$-point set that is $2/9$-close to convex position (right). 
 Both sets are $1/5$-far from convex position.}
	\label{fig:examples}
\end{figure}

Here we use the convention that the approximation ratio of an algorithm is smaller than $1$ for a maximization
problem and larger than $1$ for a minimization problem; see~\cite{WS11}.
Unless specified otherwise, all logarithms are in base $2$.
For a set $W \subset \RR^d$, its \emph{interior} is denoted by $\mathring{W}$.


\paragraph{Nonconvexity certificates.}

By the well-known Carath\'eodory's Theorem, see, \eg, \cite[p.~6]{Mat02},
if $X$ is finite point set in $\RR^d$, every point of $X$ can be expressed as a convex combination
of at most $d+1$ points in $X$. This implies that every point set that is not in convex position
contains a subset of $d+2$ points that are not in convex position, \ie, a short \emph{certificate}
of non-convexity.  We will further assume that Chan's algorithm for testing of convex position outputs
such a tuple when the input is not in convex position.

\paragraph{The convex position tester of Czumaj, Sohler, and Ziegler.}
The convex position tester of Czumaj~\etal~\cite{CSZ-esa00} draws a random sample of the input set
and makes a decision based on the convexity of this sample.  The algorithm is set up to work in $\RR^d$,
for any fixed dimension $d$. Given $\eps>0$, the tester accepts every point set in convex position,
and rejects every point set that is $\eps$-far from convex position with probability at least $2/3$. 
If the input is not in convex position and is not $\eps$-far from convex position, the outcome of the
algorithm can go either way, \ie, there is no specified action for the situation in-between. 
Most of the technical justification is unpublished; for the present time, it can be found online~\cite{CSZ00}.
The authors present two testers for convex position: \texttt{Convex-A} and \texttt{Convex-B},
see~\cite[p.~161]{CSZ-esa00}:

\bigskip
\noindent{{\bf Algorithm~\texttt{Convex-A}}}
\begin{itemize} \itemsep 1pt
\item[] Step 1: Choose a subset $S \subset P$ of size
$s = 36 \cdot n^{\frac{d}{d+1}} \eps^{-\frac{1}{d+1}}$ uniformly at random.
\item[] Step 2: Compute all $h$ extreme points of $S$.
\item[] Step 3: If $h < n$ then \underline{reject} else \underline{accept}.
\end{itemize}

\bigskip
\noindent{{\bf Algorithm~\texttt{Convex-B}}}
\begin{itemize} \itemsep 1pt
\item[] Step 1: Choose a subset $S \subset P$ of size $s = 4/\eps$ uniformly at random.
\item[] Step 2: For each $p \in S$ [simultaneously] check whether $p$ is extreme for $\conv(P)$.
  If $p$ is not extreme for $\conv(P)$ then \underline{exit loop} and \underline{reject}.
\item[] Step 3: If all checks complete, \underline{accept}.
\end{itemize}

The \emph{query complexity}, \ie, the  number of points requested from an \emph{oracle} to perform
the testing, is  $O(n^{d/(d+1)} \eps^{-1/(d+1)})$, which is claimed by the authors to be optimal
(no proof is provided)~\cite{CSZ-esa00}.
The corresponding running time follows from Corollary~\ref{cor:chan} and is subquadratic
in any fixed dimension~$d$.  

The correctness proof for \texttt{Convex-A} is only sketched in~\cite{CSZ-esa00}.
It is however similar in nature to the revised argument we give here based on
Lemmata~\ref{lem:one}, \ref{lem:many} and~\ref{lem:sample}.
The correctness proof for \texttt{Convex-B}, also omitted in~\cite{CSZ-esa00}, is implied from the following.

\begin{lemma} \label{lem:far} 
Let $P \subset \RR^d$ be $\eps$-far from convex position. Then $|P \setminus \Ext(P)| > \eps |P|$.
\end{lemma}
\begin{proof}
  Assume for contradiction that $|P \setminus \Ext(P)| \leq \eps |P|$.
  Removing all points in $P \setminus \Ext(P)$ yields a convex set and thus
  $P$ is $\eps$-close to convex position, a contradiction.
\end{proof}

In fact the sample size in \texttt{Convex-B} can be reduced in half;
\ie, one can set $s=2/\eps$, see below.
If the input $P \subset \RR^d$ is $\eps$-far from convex position, 
then the set $Q=P \setminus \Ext(P)$ is large enough and the tester would reject $P$
if at least one sample point is in $Q$. Since $|Q| \geq \eps |P|$, we have
\[ \Prob(S \cap Q = \emptyset) \leq (1-\eps)^{2/\eps} \leq \e^{-2} \leq \frac13, \]
by applying the standard inequality $1-x \leq \e^{-x}$ for $0 \leq x \leq 1/2$.
Thus $P$ is rejected with probability at least $2/3$, as required.
Note that an input in convex position is accepted by either tester.
In summary, by Corollary~\ref{cor:chan} and Corollary~\ref{cor:chan2},
negative testing  (via \texttt{Convex-A} or \texttt{Convex-B}) can be
accomplished in time
\begin{equation}
  O\left( \min \left \{ T\left(n^{\frac{d}{d+1}} \eps^{-\frac{1}{d+1}}, n^{\frac{d}{d+1}} \eps^{-\frac{1}{d+1}} \right),
  T\left(n, \eps^{-1} \right) \right \} \right).
\end{equation}

Unfortunately, the convex position tester of Czumaj~\etal~\cite{CSZ-esa00} suffers from structural
and performance issues as explained below. 
One issue is an unreasonable dependence of the tester \texttt{Convex-A} of the input parameter $\eps$;
a second concerns a technical lemma that needs correction.
Moreover, as mentioned earlier, most of the claims made in~\cite{CSZ-esa00} are unverifiable since
most proofs are omitted. Here we fix these problems and obtain a more performant negative tester.
Further, its functionality is expanded by including positive certificates. Our paper is self-contained
with all needed proofs included.

(i)~The sample size used by tester \texttt{Convex-A} is
\[ s = 36 \cdot n^{\frac{d}{d+1}} \eps^{-\frac{1}{d+1}}. \]
Since $s \leq n$ is a prerequisite for using the tester, this imposes the restriction $36^{d+1} \leq \eps n$;
equivalently, $\eps \geq 36^{d+1}/n$. 
Since $\eps<1$, this implies $n > 36^{d+1}$. This requirement makes the tester impractical 
even for moderate values of $d$. For instance, if $d=20$, tester \texttt{Convex-A} can only test sets
with $n > 4.8 \cdot 10^{32}$ points. Similarly, if $d=50$, tester \texttt{Convex-A} can only test sets
with $n > 2.3 \cdot 10^{79}$ points, which is approximately the number of atoms in the observable universe.
Arguably, such applications, if any, are rare. As such, the tester isn't functional in the range $d \geq 50$.
In contrast, our Algorithm \texttt{Convex-} in Subsection~\ref{subsec:negative} is only subject to the
very modest restriction $\eps \geq (d+1)/n$. Similarly, our Algorithm \texttt{Convex+} 
in Subsection~\ref{subsec:positive} is subject to very modest restrictions. 

\old{
  (ii)~Another issue is in regard to the correctness of the tester \texttt{Convex-B} in view of the sample size
$s =4/\eps$ used by the tester.
Suppose that $d=4$ and the input is an $n$-element point set that is $\eps$-far from convex position
for a constant $\eps$, say $\eps=1/4$, but not for a larger $\eps$.
By the optimality of the testing sample $s$ mentioned above,
it is required that $s =\Omega(n^{4/5} \eps^{-1/5})$. For $s =4/\eps$, this implies $\eps =O(1/n)$, which
does not hold for large $n$. The tester \texttt{Convex-B} is therefore incorrect; its output is incorrect
most of the time for the input described above and many others.
}  

(ii)~Another issue is the correctness of Lemma 3.4 in~\cite{CSZ00}, discussed in Section~\ref{sec:appendix}.
Our Lemma~\ref{lem:sample} is proposed as a replacement.

\paragraph{Our results.}
  We revisit the problem of property testing for convex position for point sets in $\RR^d$.
  Our results draw from previous design and ideas of Czumaj, Sohler, and Ziegler (ESA 2000).
  First, the algorithm is redesigned and its analysis is revised for correctness.
  Second, its functionality is expanded by (i)~exhibiting both negative and positive certificates
  along with the convexity determination, and (ii)~significantly extending the input range
  for moderate and higher dimensions. 
The tester is implemented by two procedures: \texttt{Convex-} and \texttt{Convex+}.
Both run in $O\left(n^{\frac{2\lfloor d/2 \rfloor}{\lfloor d/2 \rfloor +1}} \log^{O(1)}{n}\right) = o(n^2)$ time,
for every $n$ and $\eps$.

\smallskip
The behavior of Algorithm \texttt{Convex-} can be summarized as follows. Let $0 <\eps <1$ be an input parameter. 
\begin{enumerate} \itemsep 1pt
\item If $P$ is in convex position, the algorithm accepts $P$.
\item If $P$ is $\eps$-far from convex position, with probability at least $2/3$ the algorithm rejects $P$ and 
  outputs a $(d+2)$-point witness of non-convexity (as a negative certificate). 
\end{enumerate}

The behavior of Algorithm \texttt{Convex+} can be summarized as follows. Let $0 <\eps <1$ be an input parameter,
and $0< \delta \leq 1/2$ be an adjustable parameter.  
\begin{enumerate} \itemsep 1pt
\item If $P$ is in convex position, the algorithm accepts $P$.
\item If $P$ is $\eps$-close  to convex position for some $\eps>0$ that satisfies $n^{-1} \leq \eps \leq n^{\delta-1}$,
  with probability at least $2/3$ the algorithm accepts $P$ and outputs a $1/(6 n^\delta)$-approximation of the
  largest subset in convex position  as a positive certificate.
\end{enumerate}

\paragraph{Related work.}
Two early articles in the area of property testing are due to Blum~\etal~\cite{BLR93} and
Erg{\"u}n~\etal~\cite{EKK+00}.
Besides testing for convex position, testing for other geometric properties has been considered
in~\cite{CSZ-esa00}: pairwise disjointness of a set of generic bodies, disjointness of two polytopes,
and Euclidean minimum spanning tree verification.  
A~continuation of the work in~\cite{CSZ-esa00} appears in~\cite{CS-esa01}.
A more recent article on property testing for point sets in the plane is due to Han~\etal~\cite{HKSS18}.
Two recent monographs dedicated to the general subject of property testing are~\cite{BY22} and~\cite{Gol17}.
The topic of property testing, including testing for convex position, is also addressed in a recent book by
Eppstein~\cite{Epp18}. A question from that book is discussed in Section~\ref{sec:remarks}.

\section{An enhanced functionality tester for convex position}  \label{sec:main}

The tester is implemented by two procedures: Algorithm \texttt{Convex-}  (in Subsection~\ref{subsec:negative})
and Algorithm \texttt{Convex+} (in Subsection~\ref{subsec:positive}).
The two procedures may be run independently of each other. 
The goal of Algorithm \texttt{Convex-} is rejecting point sets that are far from convex position;
whereas that of Algorithm \texttt{Convex+} is accepting point sets that are close to convex position. 
Each algorithm exhibits a suitable certificate along with its probabilistic determination.
While the decision is randomized, the certificates produced are indisputable, \ie,
a negative certificate is always a $(d+2)$-point set that is not in convex position,
and a positive certificate output by Algorithm \texttt{Convex+} is always a $1/(6 n^\delta)$-approximation of the
largest subset in convex position.

\paragraph{Common tools.}
A randomized algorithm for generating a random $s$-set for a given $s$, $1 \leq s \leq n$, 
in $O(s \log{s})$ time (and $O(s)$ expected time) from~\cite[Ch.~4]{NW78}, can be used to implement
random sample selection.  Alternatively, a linear-time algorithm for the same task from~\cite[Sec~5.2]{RND77}
can also be used.

\subsection{Negative testing: Algorithm \texttt{Convex-}} \label{subsec:negative}

Several constraints among the input parameters need to be respected usually for technical reasons.
In particular, it is assumed that (note that these constraints are very mild):
\begin{itemize} \itemsep 1pt
\item $n \geq 2^{10}$,  this is needed in the proof of Lemma~\ref{lem:sample}.
\item $n \geq 32(d+1)$, this ensures that $\ell \leq n/32$ when using Lemma~\ref{lem:sample}.
\item $\eps \geq \frac{10(d+1)}{n}$, this ensures that $k \geq 10$ in Step~1;
  compare this to the constraint $\eps \geq 36^{d+1}/n$ in tester \texttt{Convex-A} that restricts its
  use to low dimensions. 
 \item  $\eps \leq \frac{d-1}{2d}$,  this ensures $\frac{(1-\eps)}{d+1} \geq \frac{1}{2d}$ in the analysis. 
\end{itemize}

\smallskip
\noindent{{\bf Algorithm~\texttt{Convex-}}}
\begin{itemize} \itemsep 1pt
\item[] Step 1:  Let $k= \lfloor \frac{\eps n}{d+1} \rfloor$, $\ell=d+1$, 
  $s_0= \ell + \frac{n-\ell}{(2k)^{1/\ell}}$, and $s = \lceil s_0 \rceil$.
Repeat Step~2 and Step~3 in succession up to $22$ times.
\item[] Step 2:  Randomly select a subset $S \subset P$ of size $s$,
  with all $s$-subsets being equally likely. 
\item[] Step 3: Test $S$ for convex position using Chan's algorithm. 
  If $S$ is not in convex position, output a $(d+2)$-point witness of non-convexity and reject $P$.
  Otherwise go to Step~2 for the next repetition.
\item[] Step 4: If all $22$ samples were determined to be in convex position, accept $P$. 
\end{itemize}

\paragraph{Time analysis.}
It is easily verified that the setting for $s$ in Step~1 yields
\[ s = \Theta \left( n^{\frac{d}{d+1}} \eps^{-\frac{1}{d+1}} \right). \]
This is in accordance with the choice of the sample size for Algorithm \texttt{Convex-A} in~\cite{CSZ-esa00}. 
As such, the runtime of Algorithm~\texttt{Convex-} is
\begin{align*}
T(s,s) &= O \left( T  \left( n^{\frac{d}{d+1}} \eps^{-\frac{1}{d+1}}, n^{\frac{d}{d+1}} \eps^{-\frac{1}{d+1}} \right) \right) \\
&= O\left(n^{\frac{d}{d+1} \cdot \frac{2\lfloor d/2 \rfloor}{\lfloor d/2 \rfloor +1}}
\cdot \eps^{-\frac{1}{d+1} \cdot \frac{2\lfloor d/2 \rfloor}{\lfloor d/2 \rfloor +1}}
\cdot \log^{O(1)}{(n/\eps)} \right).
\end{align*}
Since $\eps =\Omega(1/n)$, the above expression becomes
\[ T(s,s) = O\left( T(n,n) \right) = O\left(n^{\frac{2\lfloor d/2 \rfloor}{\lfloor d/2 \rfloor +1}} \log^{O(1)}{n}\right) = o(n^2),
\text{ for every }n \text{ and } \eps. \]
This can be also seen directly: since $s \leq n$, $T(s,s) \leq T(n,n)= o(n^2)$. 

\paragraph{Rejecting the input with probability $\geq 2/3$.}
Assume that $P$ is $\eps$-far from convex position.
We show that with probability at least $2/3$, Algorithm~\texttt{Convex-}
rejects the input in step 3 and outputs a suitable $(d+2)$-point witness. 
We first recall the following lemmas (analogous to Lemma 3.1 and 3.2 from~\cite{CSZ00}),
slightly rewritten here for convenience.

\begin{lemma} \label{lem:one} {\rm (An earlier version in~\cite{CSZ00})}. 
Let $P \subset \RR^d$ be a set of $n$ points that is not in convex position and $p \in P$ be an interior point.
Then there exist points $p_1,\ldots,p_d \in P$ and $U \subset P \setminus \{p_1,\ldots,p_d,p\}$ with
$|U| \geq \frac{n-1}{d+1}$ such that $\{p_1,\ldots,p_d,p\} \cup \{q\}$ is not in convex position for every $q \in U$;
more precisely, $p$ is an interior point in the simplex $\Delta(p_1,\ldots,p_d,q)$ for every $q \in U$.
\end{lemma}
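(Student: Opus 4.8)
The plan is to decompose the directions around $p$ radially and then pigeonhole over the $d+1$ facets of a single simplex that contains $p$. First I would apply Carath\'eodory's theorem to $p\in\conv(P\setminus\{p\})$ to write $p=\sum_{i=1}^{d+1}\lambda_i r_i$ with $r_i\in P\setminus\{p\}$, $\lambda_i\ge 0$, $\sum_i\lambda_i=1$. General position forces the $r_i$ to be distinct and every $\lambda_i>0$: otherwise $p$ would be a convex combination of at most $d$ points of $P$, i.e.\ a nontrivial affine dependence among at most $d+1$ points of $P$. Hence $p$ lies in the interior of the (genuinely $d$-dimensional) simplex $\Sigma:=\Delta(r_1,\dots,r_{d+1})$.

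Next I would project $P\setminus\{p\}$ radially from $p$ onto the unit sphere, setting $u_q:=(q-p)/\lVert q-p\rVert$. Writing $F_i$ for the facet of $\Sigma$ opposite $r_i$, its radial image from $p$ is the closed spherical simplex $\sigma_i=\overline{\mathrm{pos}}\{u_{r_j}:j\ne i\}\cap S^{d-1}$; since $p$ is interior to $\Sigma$, every ray from $p$ leaves $\Sigma$ through exactly one facet, so $\sigma_1,\dots,\sigma_{d+1}$ cover $S^{d-1}$ with pairwise disjoint interiors, and the same holds for the antipodal cells $-\sigma_1,\dots,-\sigma_{d+1}$. A short general-position check shows that each $u_q$ ($q\in P\setminus\{p\}$) lies in the interior of exactly one $-\sigma_i$: if $u_q$ were on a wall $\sigma_i\cap\sigma_j$, the line through $p$ and $q$ would meet the $(d-2)$-flat spanned by the $d-1$ points $r_k$, $k\ne i,j$, and since $p$ is not in that flat, the line would lie in the hyperplane spanned by $p$ together with these $r_k$, forcing $q$ into it as well — putting $d+1$ points of $P$ in a common hyperplane, a contradiction.

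Then pigeonhole: the $n-1$ directions $u_q$ fall into the $d+1$ interiors $-\mathring\sigma_i$, so some $-\mathring\sigma_{i^{\ast}}$ captures at least $\frac{n-1}{d+1}$ of them. I would take $\{p_1,\dots,p_d\}:=\{r_j:j\ne i^{\ast}\}$ (the vertices of $F_{i^{\ast}}$) and $U:=\{q\in P\setminus\{p\}:u_q\in-\mathring\sigma_{i^{\ast}}\}$, so $|U|\ge\frac{n-1}{d+1}$; no chosen $p_i$ lies in $U$, because $u_{p_i}$ points toward $p_i\in F_{i^{\ast}}$ and so sits in $\sigma_{i^{\ast}}$, not in $-\sigma_{i^{\ast}}$, giving $U\subset P\setminus\{p_1,\dots,p_d,p\}$. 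For $q\in U$, the condition $-u_q\in\mathring\sigma_{i^{\ast}}$ says the ray from $p$ in direction $p-q$ meets the relative interior of $\Delta(p_1,\dots,p_d)$ at some $p'=\sum_i\nu_i p_i$ with all $\nu_i>0$; writing $p'=p+t(p-q)$ with $t>0$ and solving for $p$ exhibits $p$ as a strictly positive convex combination of $p_1,\dots,p_d,q$. As these $d+1$ points are affinely independent by general position, $p$ is an interior point of $\Delta(p_1,\dots,p_d,q)$, so $\{p_1,\dots,p_d,p\}\cup\{q\}$ fails to be in convex position, which is exactly the claim.

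I expect the only real friction to be the general-position bookkeeping — checking that no coefficient $\lambda_i$ vanishes, and, more annoyingly, that no projected direction $u_q$ lands on a lower-dimensional wall of the spherical decomposition — together with the care needed to keep interiors and closures of the spherical cells straight. The geometric heart (one enclosing simplex, its $d+1$ facets, and a pigeonhole over directions viewed from $p$) is short, and it is precisely this that makes the bound $\frac{n-1}{d+1}$ appear.
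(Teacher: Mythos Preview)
Your argument is correct and is essentially the same as the paper's: after Carath\'eodory yields a simplex $\Sigma=\Delta(r_1,\dots,r_{d+1})$ with $p$ in its interior, the paper translates $p$ to the origin and partitions $\RR^d$ into the $d+1$ cones $C_i=\mathrm{cone}\{-r_j:j\ne i\}$, which are precisely your antipodal spherical cells $-\sigma_i$ promoted to cones, and then pigeonholes over $P\setminus\{p\}$. Your write-up is a bit more explicit about the general-position checks (no $\lambda_i=0$, no $u_q$ on a wall, the $p_i$ not landing in $U$) that the paper leaves implicit, but the geometric content and the path to the bound $\frac{n-1}{d+1}$ are identical.
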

\begin{proof}
  Since $p \in P$ is an interior point, by Carath\'eodory's Theorem and by the general position
  assumption, there exists a set $W \subset P$ of size $d+1$ such that $p \in \mathring{W}$.
  See Fig.~\ref{fig:cone}.

\begin{figure}[htbp]
\centering
\includegraphics[scale=0.9]{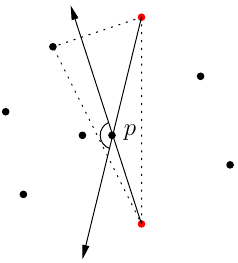}
\caption{$P$ is a set of $9$ points in the plane. The cone determined by the two red points
  contains $4 \geq 8/3$ points in $P$.}
\label{fig:cone}
\end{figure}

  Denote by $W_i$, $i=1,\ldots,d+1$, the $d+1$ subsets of $W$ of size $d$.  We show that
  one of the subsets $W_i$ of $W$ satisfies the requirement in the lemma. We may assume without loss of generality
  that $p=(0,\ldots,0)$. We partition $\RR^d$ into $d+1$ cones as follows. Let $W^{-}_i$, $i=1,\ldots,d+1$,
  denote the set of points $\{(-x_1,\ldots,-x_d) \colon (x_1,\ldots,x_d) \in W_i\}$. 
  The conic combination of the point vectors in the set $W^{-}_i$ defines a cone $C_i$, $i=1,\ldots,d+1$. 
  The union of these cones cover $\RR^d$. Thus there is a cone $C_j$, $1 \leq j \leq d+1$, that contains
  at least  $\frac{n-1}{d+1}$ points in $P$. Observe that for every $q \in P \cap C_j$ we have
  $ p \in \mathring{(W_j \cup \{q\})}$. Consequently, one can set $\{p_1,\ldots,p_d\} = W_j$
  to conclude the proof. 
\end{proof}

The following lemma applies to point sets that are far from convex position. The sets $W_i$ and $U_i$ constructed
in the lemma are fixed before the samplings and are only used in the algorithm analysis.

\begin{lemma} \label{lem:many}  {\rm (An earlier version in~\cite{CSZ00})}. 
  Let $P \subset \RR^d$ be a set of $n$ points that is $\eps$-far from convex position
  and let $k= \lfloor \frac{\eps n}{d+1} \rfloor$.
  Then there exist sets $W_i, U_i \subset P$ for $1 \leq i \leq k$, such that the following conditions are satisfied:
\begin{enumerate}  [{\rm (i)}] \itemsep 1pt
\item $|W_i|=d+1$ for $1 \leq i \leq k$,
\item $W_i \cap W_j= \emptyset$ for all $1 \leq i<j \leq k$,
\item $W_i \cap U_i = \emptyset$ for $1 \leq i \leq k$,
\item $W_i \cup \{q\}$ is not in convex position for every $q \in U_i$, and
\item $|U_i| \geq \frac{n}{d+1} -k$  for $1 \leq i \leq k$. In particular,
 $|U_i| \geq \frac{(1-\eps) n}{d+1}$. 
\end{enumerate}
\end{lemma}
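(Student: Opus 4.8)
The plan is to build the pairs $(W_i, U_i)$ greedily, one at a time, by repeatedly invoking Lemma~\ref{lem:one} on a shrinking subset of $P$. First I would observe that since $P$ is $\eps$-far from convex position, removing fewer than $\eps n$ points cannot make it convex; in particular, after removing any set of at most $(d+1)(k-1) < \eps n$ points, the remaining set is still \emph{not} in convex position (it is in fact still $\eps'$-far for a slightly smaller $\eps'$, but non-convexity is all I need). So the construction never gets stuck.

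Concretely, I would proceed by induction on $i = 1, \ldots, k$. Suppose $W_1, \ldots, W_{i-1}$ have been chosen, pairwise disjoint, each of size $d+1$. Let $P_i = P \setminus (W_1 \cup \cdots \cup W_{i-1})$, so $|P_i| = n - (i-1)(d+1) \geq n - (k-1)(d+1) > n - \eps n \geq (1-\eps)n$. Since $(i-1)(d+1) \le (k-1)(d+1) < \eps n$, the set $P_i$ is not in convex position, so it has an interior point $p$. Apply Lemma~\ref{lem:one} to $P_i$ (with $|P_i|$ points): this yields points $p_1, \ldots, p_d \in P_i$ and a set $U \subseteq P_i \setminus \{p_1,\ldots,p_d,p\}$ with $|U| \ge \frac{|P_i|-1}{d+1}$ such that $\{p_1,\ldots,p_d,p\}\cup\{q\}$ is not in convex position for every $q\in U$. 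Now set $W_i = \{p_1, \ldots, p_d, p\}$ (size $d+1$) and $U_i = U$. Property~(i) holds by construction; (ii) holds because $W_i \subseteq P_i$ is disjoint from $W_1 \cup \cdots \cup W_{i-1}$; (iii) holds because $U_i = U \subseteq P_i \setminus W_i$; and (iv) is exactly the non-convexity conclusion of Lemma~\ref{lem:one}. For (v), I would bound $|U_i| \geq \frac{|P_i| - 1}{d+1} \geq \frac{n - (k-1)(d+1) - 1}{d+1} \geq \frac{n}{d+1} - k$ (using $(k-1)(d+1) + 1 \le k(d+1)$), and then note $\frac{n}{d+1} - k \ge \frac{n}{d+1} - \frac{\eps n}{d+1} = \frac{(1-\eps)n}{d+1}$ since $k \le \frac{\eps n}{d+1}$.

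The only subtle point—and the step I expect to need the most care—is verifying that $P_i$ is genuinely not in convex position, i.e., that the total number of removed points $(i-1)(d+1)$ stays strictly below $\eps n$ throughout all $k$ rounds. This amounts to checking $(k-1)(d+1) < \eps n$, which follows from $k = \lfloor \frac{\eps n}{d+1}\rfloor \le \frac{\eps n}{d+1}$, hence $(k-1)(d+1) \le \eps n - (d+1) < \eps n$. (Here I am implicitly using $d+1 > 0$; the mild constraint $\eps n \ge 10(d+1)$ from the algorithm's setup guarantees $k \ge 10 \ge 1$, so the range of $i$ is nonempty and everything is well-defined.) Everything else is bookkeeping on cardinalities, so no further obstacle is anticipated.
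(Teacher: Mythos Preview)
Your proposal is correct and follows essentially the same approach as the paper: an iterative (greedy) construction that sets $P_i = P \setminus (W_1 \cup \cdots \cup W_{i-1})$, argues $P_i$ is still non-convex because fewer than $\eps n$ points were removed, applies Lemma~\ref{lem:one} to obtain $W_i$ and $U_i$, and then bounds $|U_i|$ via the same chain of inequalities. Your write-up is in fact slightly more careful than the paper's in handling the floor (you use $k \le \frac{\eps n}{d+1}$ where the paper writes an equality), but the argument is otherwise identical.
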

\begin{proof}
  We construct point sets $P_1,P_2,\ldots,P_k$ iteratively. We initially set $P_1:=P$ and then
  iteratively find $W_i \subset P_i$ and set $P_{i+1}:= P_i \setminus W_i$ for $i=1,\ldots,k$.
  By construction the sets $W_i$ are pairwise disjoint, as required. 
  Assuming that $|W_i|=d+1$ for $1 \leq i \leq k$, implies that
\[ |P_i|= n - (d+1)(i-1) \geq n - (d+1)(k-1) > n - (d+1) \frac{\eps n}{d+1} = (1-\eps) n. \] 
By the assumption in the lemma, $P_i$ cannot be in convex position. By Lemma~\ref{lem:one} there exist
$p_1,\ldots,p_d,p \in P_i$ and $U_i \subset P_i \setminus \{p_1,\ldots,p_d,p\}$ with
\begin{align*}
|U_i| &\geq \frac{|P_i|-1}{d+1} \geq \frac{n -(d+1)(i-1)-1}{d+1} \geq \frac{n -(d+1)(k-1)-1}{d+1}\\
&> \frac{n}{d+1} -k =  \frac{n}{d+1} - \frac{\eps n}{d+1} = \frac{(1-\eps) n}{d+1},
\end{align*}
such that $p$ is an interior point in the simplex $\Delta{p_1,\ldots,p_d,q}$ for every $q \in U_i$.
Let $W_i:= \{p_1,\ldots,p_d,p\}$ and observe that $W_i \cap U_i = \emptyset$.
Note that all properties in the lemma have been verified.
\end{proof}

We also need another lemma suggested by Czumaj~\etal~\cite{CSZ00}.
Here we include a proof that follows the ideas of the original proof, however, it is revised for correctness
and for a slightly restricted range of the parameters that suffices for our purposes.
More details can be found in Section~\ref{sec:appendix}.

\begin{lemma} \label{lem:sample} {\rm (An earlier version in~\cite{CSZ00})}. 
  Let $\Omega$ be a set of size $n$ and $W_1,W_2,\ldots,W_k$ $ \subset \Omega$ be $k$ pairwise disjoint subsets
  of $\Omega$ of size $\ell$, where $k \geq 10$ and $3 \leq \ell \leq n/32$.
  Let $s$ be a positive integer such that   $\ell + \frac{n-\ell}{(2k)^{1/\ell}} \leq s \leq n$ 
  and $S \subset \Omega$ be a subset of $\Omega$ of size $s$ chosen uniformly at random. Then
  \[ \Prob(\exists i \leq k \colon (W_i \subset S)) \geq \frac14. \]
\end{lemma}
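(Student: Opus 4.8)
The plan is to lower-bound the probability that at least one of the $W_i$ is entirely contained in the random $s$-subset $S$ by a second-moment / Bonferroni argument. Let $A_i$ be the event $W_i \subset S$. Since $|W_i| = \ell$ and $S$ is a uniform $s$-subset of $\Omega$, we have $p := \Prob(A_i) = \binom{n-\ell}{s-\ell} / \binom{n}{s}$, which can be rewritten as the falling-factorial ratio $\prod_{j=0}^{\ell-1} \frac{s-j}{n-j}$. The crux is to show that the chosen threshold $s \geq \ell + \frac{n-\ell}{(2k)^{1/\ell}}$ forces $kp$ to be at least a constant of the right size — indeed, one should get $kp \geq 1/2$. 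To see this, note $\frac{s-j}{n-j} \geq \frac{s-\ell}{n-\ell}$ for $0 \leq j \leq \ell - 1$ (since the map $x \mapsto \frac{s-x}{n-x}$ is decreasing when $s \leq n$), so $p \geq \left( \frac{s-\ell}{n-\ell} \right)^{\ell} \geq \left( (2k)^{-1/\ell} \right)^{\ell} = \frac{1}{2k}$, whence $kp \geq 1/2$.

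Next I would control the pairwise correlations. First I would observe that because the $W_i$ are pairwise disjoint, $\Prob(A_i \cap A_j) = \binom{n-2\ell}{s-2\ell}/\binom{n}{s} = \prod_{m=0}^{2\ell-1} \frac{s-m}{n-m}$, and I want to compare this to $p^2$. The events are in fact \emph{negatively} correlated in the relevant regime: intuitively, placing $W_i$ inside $S$ uses up $\ell$ of the $s$ slots, making it slightly harder to fit $W_j$. Making this precise, $\Prob(A_j \mid A_i) = \prod_{m=0}^{\ell-1} \frac{s-\ell-m}{n-\ell-m} \leq \prod_{m=0}^{\ell-1}\frac{s-m}{n-m} = p$, since each factor $\frac{s-\ell-m}{n-\ell-m} \leq \frac{s-m}{n-m}$ (again monotonicity of $x \mapsto \frac{s-x}{n-x}$). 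Hence $\Prob(A_i \cap A_j) \leq p^2$ for all $i \neq j$.

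Now apply the inclusion–exclusion / Bonferroni lower bound
\[
\Prob\!\left(\bigcup_{i=1}^{k} A_i\right) \;\geq\; \sum_{i=1}^{k} \Prob(A_i) \;-\; \sum_{i<j} \Prob(A_i \cap A_j) \;\geq\; kp \;-\; \binom{k}{2} p^2 \;\geq\; kp - \frac{(kp)^2}{2}.
\]
Writing $t = kp$, the right-hand side is $t - t^2/2 = t(1 - t/2)$. We have established $t \geq 1/2$; we also need an \emph{upper} bound on $t$ to keep $t(1-t/2)$ away from $0$. For this I would bound $p$ from above using $\frac{s-j}{n-j} \leq \frac{s}{n} \leq \frac{s_0}{n}$ is too crude, so instead use $\frac{s-j}{n-j}\le \frac{s-\ell}{n-\ell}\cdot\frac{n-\ell}{n-j}\cdot\frac{s-j}{s-\ell}$; cleanly, since the threshold value of $s$ is $s_0 = \ell + (n-\ell)(2k)^{-1/\ell}$ and $s$ could be as large as $n$, the event probability can be as large as $1$, so a uniform upper bound on $t$ is \emph{not} available for all admissible $s$. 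The fix — and this is the step I expect to be the main obstacle — is to observe that the statement only needs $\Prob(\bigcup A_i) \geq 1/4$, and the function $t \mapsto t - t^2/2$ is increasing on $[0,1]$ and equals $3/8 \geq 1/4$ at $t = 1/2$; so it suffices to handle two cases. If $t = kp \leq 1$, then $t - t^2/2 \geq \min\{1/2 - 1/8,\ \text{value at endpoint}\} \geq 3/8 > 1/4$ by monotonicity on $[1/2,1]$. If $t = kp > 1$, the Bonferroni bound degrades, so instead I would directly use that when $kp$ is large the union is very likely: pick any fixed $W_i$ and note $\Prob(\bigcup A_j) \geq \Prob(A_i) = p$, but $p$ could still be tiny with $k$ huge — that does not close it either, so the genuinely careful route is to truncate, i.e.\ restrict attention to the first $k' = \lceil 1/p \rceil \leq k$ sets (possible since $kp \geq 1/2$ means $k' \le 2k$... need $k'\le k$, which holds once $p \geq 1/k$, i.e.\ $kp\ge 1$), apply Bonferroni to those $k'$, getting $k'p - \binom{k'}{2}p^2 \geq 1 - \frac{(k'p)^2}{2}$ with $k'p \in [1, 1+p] \subseteq [1, 2]$, and verify $1 - (k'p)^2/2$... this can go negative, so one instead truncates to $k'$ with $k'p \approx 1/2$: choose $k' = \lceil 1/(2p) \rceil$, which satisfies $k' \leq k$ whenever $kp \geq 1/2$ (true) and $k' \geq 1$; then $k'p \in [1/2, 1/2 + p]$ and, using $p \le 1/(2k) \le 1/20$ from $k\ge 10$, we get $k'p \in [1/2, 0.55]$, so $k'p - \binom{k'}{2}p^2 \geq k'p(1 - k'p/2) \geq 0.5 \cdot (1 - 0.275) > 1/4$. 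Running Bonferroni on this subfamily of $k'$ of the $W_i$'s then yields $\Prob(\bigcup_{i\le k} A_i) \geq \Prob(\bigcup_{i \le k'} A_i) \geq 1/4$, which is the claim. The side constraints $k \geq 10$, $\ell \geq 3$, $\ell \leq n/32$, and $n \geq 2^{10}$ are exactly what is needed to make the truncation index legitimate and the numerical slack positive; I would keep track of them as the inequalities are assembled.
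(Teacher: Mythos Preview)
Your overall architecture is sound and in fact cleaner than the paper's: Bonferroni on the events $A_i=\{W_i\subset S\}$, the lower bound $p\ge\bigl(\frac{s-\ell}{n-\ell}\bigr)^{\ell}\ge 1/(2k)$ giving $kp\ge 1/2$, the negative-correlation inequality $\Prob(A_i\cap A_j)\le p^2$, and then a truncation to $k'=\lceil 1/(2p)\rceil$ sets when $kp>1$. There is, however, one genuine slip. You write ``using $p\le 1/(2k)\le 1/20$ from $k\ge 10$'', but the inequality you established goes the \emph{other} way: $p\ge 1/(2k)$. So the claim $k'p\in[1/2,0.55]$ is unjustified. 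The fix is painless: from $k'<1/(2p)+1$ and the trivial bound $p\le 1$ you get $k'p\in[1/2,3/2)$, and the parabola $t\mapsto t(1-t/2)$ is symmetric about $t=1$ with value $3/8$ at both $t=1/2$ and $t=3/2$; hence $k'p-\binom{k'}{2}p^2\ge k'p\,(1-k'p/2)\ge 3/8>1/4$ throughout. With this correction your argument goes through, and---contrary to your closing sentence---it does \emph{not} actually use any of the hypotheses $k\ge 10$, $3\le\ell\le n/32$, $n\ge 2^{10}$.

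For comparison, the paper takes a different route after Bonferroni. Rather than invoking negative correlation or truncating, it first observes that $\Prob(\bigcup A_i)$ is monotone in $s$ and so reduces to the single value $s=\lceil s_0\rceil$ where $s_0=\ell+(n-\ell)(2k)^{-1/\ell}$. It then lower-bounds the Bonferroni expression by $F_1\cdot F_2$ with $F_1=k\bigl(\frac{s-\ell}{n-\ell}\bigr)^{\ell}$ and $F_2=1-\frac{k}{2}\bigl(\frac{s-\ell}{n-\ell}\bigr)^{\ell}$, and shows $F_1\ge 1/2$ and $F_2\ge 1/2$ separately. The second of these requires controlling the overshoot $s-s_0<1$, for which the paper proves the auxiliary inequality $s_0\ge 3\ell\log k$; it is precisely here that the side constraints $k\ge 10$, $3\le\ell\le n/32$, $n\ge 2^{10}$ enter. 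Your negative-correlation-plus-truncation argument sidesteps this analytic detour entirely and yields the stronger constant $3/8$, at the cost of the extra case split; the paper's route is more direct once one commits to the minimal $s$, but pays for it with the parameter restrictions.
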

\begin{proof}
  Observe that $k \ell \leq n$, hence $k \leq n/\ell$.
  Let $s_0$ be the real number defined as follows:
\begin{equation} \label{eq:s}
  s_0= \ell + \frac{n-\ell}{(2k)^{1/\ell}}, \text{ or } k \left(\frac{s_0-\ell}{n-\ell} \right)^\ell =\frac12,
\end{equation}
and note that $\ell < s_0 < n$. Indeed, the lower bound is clear and the upper bound $s_0<n $
is equivalent to $(2k)^{1/\ell} >1$ which is obvious. We first prove that
\begin{equation} \label{eq:6ell}
  s_0 \geq 3 \ell \log{k}.
\end{equation}
It suffices to show that $n-\ell \geq 3 \ell (2k)^{1/\ell} \log{k}$, or, since $\ell \leq n/32$, that
  $3 \ell (2k)^{1/\ell} \log{k} \leq \frac{31n}{32}$.
  We have
  \[ 3 \ell (2k)^{1/\ell} \log{k} \leq 3 \ell \left( \frac{2n}{\ell} \right)^{1/\ell} \log{\left( \frac{2n}{\ell} \right)}
    \leq  \frac{31n}{32}.  \]
  Indeed, a standard verification shows that the function
  \[ f(x) = 3 x \left ( \frac{2n}{x} \right)^{1/x} \log{\left( \frac{2n}{x} \right)}, x \in \left[3,\frac{n}{32}\right], \]
  where $n \geq 2^{10}$, attains it maximum at $x= n/32$, thus
\begin{align*}
f(x) &\leq f\left(\frac{n}{32}\right) =
  3 \cdot \frac{n}{32} \cdot \left( \frac{2n}{n/32}\right)^{32/n} \log{ \left(\frac{2n}{n/32} \right)} \\
  &= \frac{3n}{32} \cdot 64^{32/n} \cdot \log{64} \leq \frac{18n}{32} \cdot \frac54 \leq \frac{31n}{32}.
\end{align*}
This concludes the proof of~\eqref{eq:6ell} and we next focus on the inequality in the lemma. 

\smallskip
Since the probability in question increases as the sample size $s$ grows, it suffices
to prove the inequality for $s = \lceil s_0 \rceil$. Observe that $\ell +1 \leq s \leq n$.  
By the Boole-Bonferoni inequality---see, \eg, ~\cite[Ch.~2]{Lov93}, we have
\begin{equation} \label{eq:bb}
  \Prob(\exists i \leq k \colon (W_i \subset S)) \geq \sum_{i=1}^k \Prob(W_i \subset S) -
 \sum_{1 \leq i<j \leq k} \Prob((W_i \cup W_j) \subset S). 
\end{equation}

It is easily verified that
\begin{align*}
  \Prob(W_i \subset S) &= \frac{{n-\ell \choose s-\ell}}{{n \choose s}}
  =\frac{(n-\ell)!}{(s-\ell)! (n-s)!} \cdot \frac{s! (n-s)!}{n!} \\
  &= \frac{(n-\ell)! s!}{n! (s-\ell)!} 
  = \prod_{r=0}^{\ell-1} \frac{s-r}{n-r}, \text{    and } \\
  \Prob((W_i \cup W_j) \subset S) &= \frac{{n-2\ell \choose s-2\ell}}{{n \choose s}}
  = \prod_{r=0}^{2\ell-1} \frac{s-r}{n-r}\\
&= \prod_{r=0}^{\ell-1} \frac{s-r}{n-r} \cdot \prod_{r=0}^{\ell-1} \frac{(s-\ell)-r}{(n-\ell)-r},  \text{    for } 1\leq i<j \leq k.
\end{align*}

Substituting these into Inequality~\eqref{eq:bb} and finally using~\eqref{eq:s} yields
\begin{align*}
\Prob(\exists i \leq k \colon (W_i \subset S)) &\geq k  \cdot \prod_{r=0}^{\ell-1} \frac{s-r}{n-r}
 - {k \choose 2} \cdot \prod_{r=0}^{\ell-1} \frac{s-r}{n-r} \cdot \prod_{r=0}^{\ell-1} \frac{(s-\ell)-r}{(n-\ell)-r} \\
 &= k \cdot \prod_{r=0}^{\ell-1} \frac{s-r}{n-r}
 \left( 1 - \frac{k-1}{2} \cdot   \prod_{r=0}^{\ell-1} \frac{(s-\ell)-r}{(n-\ell)-r} \right) \\
 &\geq k \cdot \prod_{r=0}^{\ell-1} \frac{s-\ell}{n-\ell} \cdot 
 \left( 1 - \frac{k}{2} \cdot   \prod_{r=0}^{\ell-1} \frac{s-\ell}{n-\ell} \right)\\
&=  k \cdot \left( \frac{s-\ell}{n-\ell} \right)^\ell \cdot  \left( 1 - \frac{k}{2} \cdot  \left( \frac{s-\ell}{n-\ell} \right)^\ell \right).
\end{align*}

Let
\[ F_1 =  k \cdot \left( \frac{s-\ell}{n-\ell} \right)^\ell  \text{ and }
F_2 = 1 - \frac{k}{2} \cdot  \left( \frac{s-\ell}{n-\ell} \right)^\ell . \]
It suffices to show that $F_1 \geq \frac12$ and $F_2 \geq \frac12$.
For the first inequality, we have 
\begin{align}    \label{eq:factor1}
  F_1 &= k \cdot \left( \frac{s-\ell}{n-\ell} \right)^\ell  \geq 
  k \cdot \left( \frac{s_0-\ell}{n-\ell} \right)^\ell  = \frac12.
\end{align}

For the second, recall that $0 \leq s-s_0 <1$ and $s_0 \geq 6 \ell \geq 3 \ell$ by~\eqref{eq:6ell}.
Applying the standard inequality $1+x \leq \e^x$ for $0 \leq x \leq 1/2$ yields:
\begin{equation}   \label{eq:exp}
  \left( \frac{s-\ell}{s_0-\ell} \right)^\ell = \left( 1 + \frac{s-s_0}{s_0-\ell} \right)^\ell
  \leq \left( 1 + \frac{1}{2 \ell} \right)^\ell \leq \exp(0.5) \leq 2. 
\end{equation}

Using~\eqref{eq:exp} and~\eqref{eq:s} once again yields
\begin{align} \label{eq:factor2}
 F_2 &= 1 - \frac{k}{2} \cdot  \left( \frac{s-\ell}{n-\ell} \right)^\ell = 
1- \left( \frac{s-\ell}{s_0-\ell} \right)^\ell \cdot  \frac{k}{2} \cdot \left( \frac{s_0 -\ell}{n-\ell} \right)^\ell \nonumber \\
&\geq 1 - 2 \cdot \frac{k}{2} \cdot \left( \frac{s_0 -\ell}{n-\ell} \right)^\ell =
1- k \cdot  \left( \frac{s_0 -\ell}{n-\ell} \right)^\ell  =  \frac12. 
\end{align}

Consequently, we have
\[ \Prob(\exists i \leq k \colon (W_i \subset S)) \geq F_1 \cdot F_2 \geq \frac12 \cdot \frac12 = \frac14, \]
as required.
\end{proof}

Let $k= \lfloor \frac{\eps n}{d+1} \rfloor$, $\ell=d+1$, and recall that Algorithm \texttt{Convex-}
sets $s =\lceil s_0 \rceil$, where $s_0$ is given by Equation~\eqref{eq:s}.

We next prove that the algorithm finds the sample $S$ not convex with probability $\geq 1/20$ in each of the 
$22$ repetitions in Step 2 and Step 3. 
Consider one execution of Step 2 and Step 3. 
For a fixed $i \leq k$, let $F_i$ be the event that $S \cap U_i = \emptyset$.
By Lemma~\ref{lem:many}, we have $|U_i| \geq \frac{(1-\eps)n}{d+1} \geq \frac{n}{2d}$.
Observe that
\[ \left(1-\frac{1}{2d}\right)^{d+1} \leq \frac23, \text{ for } d \geq 2. \]
By~\eqref{eq:6ell} we have $s \geq s_0 \geq 3 \ell \log{k}$, thus
(recall also that $k \geq 10$, which us used in the last inequality of the chain below)
\begin{align*}
\Prob(F_i) &= \Prob(S \cap U_i = \emptyset) =
\frac{{n -|U_i| \choose s}}{{n \choose s}} \\
&= \frac{(n-|U_i|)(n-|U_i|-1) \cdots (n-|U_i| -s+1)}{n(n-1) \cdots (n-s+1)} \leq
\left( 1 - \frac{|U_i|}{n} \right)^s \\
&\leq \left(1-\frac{1}{2d}\right)^s \leq \left(1-\frac{1}{2d}\right)^{3 \ell \log{k}} \\
&\leq \left( \frac23 \right) ^{3 \log{k}} \leq \frac{1}{5k}, \text{ for } i \in [k] \text{ and } d \geq 2.
\end{align*}

Let $E_1$ be the event that $S \cap U_i \neq \emptyset$ for every $i \leq k$.
By the union bound, we deduce that
\[ \Prob(\overline{E_1}) \leq k \cdot \Prob(F_1) \leq \frac15. \]

Let $E_2$ be the event that there exists $i \leq k$ such that $W_i \subset S$.
We next verify that the inequality $\ell + \frac{n-\ell}{(2k)^{1/\ell}} \leq s \leq n$ 
specified in Lemma~\ref{lem:sample} holds. 
Indeed,
\[ s = \lceil s_0 \rceil \geq s_0= \ell + \frac{n-\ell}{(2k)^{1/\ell}}, \]
and $s_0<n$ as shown in the proof of Lemma~\ref{lem:sample}, whence $s = \lceil s_0 \rceil \leq n$.
Hence by Lemma~\ref{lem:sample} we have 
  \[ \Prob(E_2) = \Prob(\exists i \leq k \colon (W_i \subset S)) \geq \frac14. \]

Putting these bounds together yields
\begin{align*}
\Prob(E_1 \cap E_2) &=1 - \Prob(\overline{E_1} \cup \overline{E_2}) \geq
1 -\Prob(\overline{E_1}) - \Prob(\overline{E_2}) \\
&\geq 1 -  \frac{1}{5} - (1- \Prob(E_2)) = \Prob(E_2) - \frac{1}{5}\\
& \geq \frac14 - \frac{1}{5} = \frac{1}{20}.
\end{align*}

Let $E$ be the event that Algorithm \texttt{Convex-} finds the sample not convex in at least one
of the $22$ executions of Step 2 and Step 3. The $22$ repetitions are independent events, thus
\[ \Prob(E) \geq 1 - \left(1- \frac{1}{20} \right)^{22} \geq \frac23. \]
Thus with probability at least $2/3$, Algorithm \texttt{Convex-} rejects the input, as required.

\subsection{Positive testing: Algorithm \texttt{Convex+}} \label{subsec:positive}

Assume for technical reasons that $n$ is sufficiently large: $n \geq 1500$.
Let $0< \delta \leq 1/2$ be an adjustable parameter. 
Assume that $P$ is $\eps$-close to convex position
for some $\eps>0$, where $n^{-1} \leq \eps \leq n^{\delta-1}$; note, this means
that $P$ can be made convex by removing at most $\eps n \leq n^\delta$ points. 

\smallskip
\noindent{{\bf Algorithm~\texttt{Convex+}}}
\begin{itemize} \itemsep 1pt
\item[] Step 1: Randomly select a subset $S \subset P$ of size $s= \lceil 1/(6 \eps) \rceil $,
  with all $s$-subsets being equally likely. 
\item[] Step 2: Test $S$ for convex position using Chan's algorithm. 
  If $S$ is not in convex position, output a $(d+2)$-point witness of non-convexity and reject $P$.
  Otherwise  output $S$ as a subset in convex position and accept $P$. 
\end{itemize}

\paragraph{Time analysis.}
The setting $s= \lceil 1/(6 \eps) \rceil $ in Step~1 yields
that the runtime of Algorithm~\texttt{Convex+} is
\begin{align*}
T(s,s) &= O \left( T( 1/\eps, 1/\eps) \right) 
= O\left(\eps^{-\frac{2\lfloor d/2 \rfloor}{\lfloor d/2 \rfloor +1}} \log^{O(1)}{1/\eps}\right).
\end{align*}
Since $\eps = \Omega(1/n)$, 
\[ T(s,s) = O\left( T(n,n) \right) =
O\left(n^{\frac{2\lfloor d/2 \rfloor}{\lfloor d/2 \rfloor +1}} \log^{O(1)}{n}\right) = o(n^2), 
\text{ for every }n \text{ and }\eps.   \]

\paragraph{Accepting the input with probability $\geq 2/3$.}
We next show that with probability at least $2/3$, Algorithm \texttt{Convex+}
accepts $P$ and outputs a subset of size $\lceil 1/(6 \eps) \rceil$ of $P$ in convex position.
By the assumption we can write $P= C \, \cup \, D$, where $C$ is in convex position and
$|D| \leq \eps n =:t$. Recall that $s=\lceil 1/(6 \eps) \rceil$. 
Note that
\[ st = \left \lceil \frac{1}{6 \eps} \right \rceil \cdot \eps n \leq \frac{1}{6 \eps} \cdot \eps n +\eps n
= \frac{n}{6} + \eps n \leq \frac{100n}{595} \text{ for } n \geq 1500. \]
Indeed, $ n \geq 1500 \implies n^{0.9} \geq 721 \implies \eps \leq 1/n^{0.9} \leq 1/721$,
for which the above inequality holds. In particular, we have $t \leq st \leq 100n/595$. 
We show that
\[   \Prob(S \cap D =\emptyset) = \Prob(S \subseteq C) \geq \frac23. \]
Applying the standard inequality $1-x \geq \e^{-2x}$ for $0 \leq x \leq 1/2$ yields:
\begin{align*}
  \Prob(S \subseteq C) &= \frac{{|C| \choose s}}{{n \choose s}} \geq \frac{{n-t \choose s}}{{n \choose s}} 
  = \frac{(n-s)(n-s-1) \cdots (n-s -t+1)}{n(n-1) \cdots (n-t+1)}\\
  &= \prod_{i=0}^{t-1} \left( 1 - \frac{s}{n-i} \right) 
  \geq \left(1 - \frac{s}{n-t+1} \right)^t \geq \exp \left(\frac{-2st}{n-t+1} \right)\\
  &\geq \exp \left(\frac{-200}{495} \right) \geq \frac23, 
\end{align*}
as required. Hence with probability at least $2/3$, $S$ is determined to be in convex position and output
by the algorithm, as required. Let $\opt$ denote the size of  the largest convex subset of $P$. 
Since $\opt \leq n$ and $\eps n \leq n^\delta$, the approximation ratio of
Algorithm \texttt{Convex+} is 
\[ \frac{s}{\opt} \geq \frac{s}{n} = \left \lceil \frac{1}{ 6 \eps } \right \rceil \frac{1}{n} \geq 
\frac{1}{ 6 \eps n} \geq \frac{1}{6 n^\delta}. \]
In particular, when $\delta=0.1$, the ratio is at least $1/24$ for all $n \leq 10^6$.

\section{Concluding remarks}   \label{sec:remarks}

\paragraph{Summary.}
We presented and analyzed a convexity-testing algorithm implemented by two procedures based on random sampling
that has the following enhanced functionality:

\begin{enumerate}  \itemsep 1pt
\item For point sets that are $\eps$-far from convex position,  with probability $\geq 2/3$ the algorithm outputs 
 a $(d+2)$-point witness of non-convexity as a negative certificate.
\item For point sets that are $\eps$-close  to convex position, with probability $\geq 2/3$ 
  the algorithm outputs a $1/(6 n^\delta)$-approximation of a maximum-size convex subset.
  [Comment: The current fastest algorithm for computing the largest subset in convex position
  takes $O(n^3)$ time for $d=2$, see~\cite{CK80,EG89}.
In contrast, the problem of computing a largest subset of points in convex position
is $\NP$-complete for $d \geq 3$~\cite{GKW13}, and moreover, no approximation
algorithm is known.]
\item The input range for the tester is significantly extended --- for moderate and higher dimensions ---
  compared to the previous version in~\cite{CSZ-esa00}. 
\end{enumerate}

\paragraph{A clarifying remark (A question of Eppstein for the planar case).}
Four-point witnesses to non-convexity can be also viewed as forbidden configurations or obstacles in
a convex set of points. Taking this view, sample-based property testing attains the following performance when
the sample size is chosen based on the structure of the obstacle set.

\begin{theorem} {\rm \cite[Theorem 6.8]{Epp18}}  \label{thm:general}
  Let $O_1,O_2,\ldots$ be  a finite set of obstacles, whose maximum size is $t$, and let $\eps$ and $p$ be numbers
  in the range $0 <\eps<1$ and $0<p<1$. Then there is a sample-based property testing algorithm for the
  property that avoids these obstacles whose sample size, on configurations of size $n$, is $O(n^{1-1/t})$ and
  whose false positive rate for configurations that are $\eps$-far from this property is at most $p$.
\end{theorem}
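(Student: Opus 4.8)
The plan is to establish Theorem~\ref{thm:general} by a direct sampling argument, choosing the sample size so that every obstacle that appears in the input is "captured" by the sample with high probability. Suppose the property is avoiding the obstacles $O_1, O_2, \ldots$, each of size at most $t$, and let $n$ be the size of the configuration. Let $s = c\, n^{1-1/t}$ for a constant $c = c(t,\eps,p)$ to be fixed later, and draw a subset $S$ of size $s$ uniformly at random. The tester accepts if $S$ avoids all obstacles and rejects otherwise. Accepting every configuration that has the property is immediate, since no obstacle occurs at all. The content is the false-positive bound: if the configuration is $\eps$-far from the property, then with probability at least $1-p$ the sample $S$ contains some obstacle.

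The key step is a packing argument analogous to Lemma~\ref{lem:many}. Being $\eps$-far from the property means one must delete more than $\eps n$ elements to destroy all obstacles; hence a greedy argument produces a family of pairwise disjoint obstacles $W_1, \ldots, W_k$ with $k = \Omega(\eps n / t)$ (each obstacle has size $\le t$, so removing one can kill at most the obstacles meeting it; iterating, after fewer than $\eps n / t$ rounds we would have deleted at most $\eps n$ elements and destroyed all obstacles, contradicting $\eps$-farness). Once we have these $k$ disjoint obstacles, we want $\Prob(\exists i \le k : W_i \subseteq S)$ to be at least $1-p$. This is exactly the kind of estimate proved in Lemma~\ref{lem:sample}, and I would either invoke it directly or re-run its Boole--Bonferroni computation: writing $W_i$ as a set of size $\ell \le t$, the probability that a single fixed $W_i$ lies in $S$ is $\prod_{r=0}^{\ell-1}\frac{s-r}{n-r} \approx (s/n)^\ell$, and the pairwise terms are of order $(s/n)^{2\ell}$; with $k \approx \eps n / t$ disjoint sets and $s \approx c\, n^{1-1/t}$, one gets $k (s/n)^t \approx (\eps/t) c^t$, a constant that can be driven above any desired threshold by enlarging $c$, while the second-moment correction stays bounded. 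Inclusion--exclusion (or Lemma~\ref{lem:sample}) then gives capture probability bounded away from zero; boosting this to $1-p$ is done either by choosing $c$ large enough or by repeating the sampling $O(\log(1/p))$ times, exactly as the "$22$ repetitions" device in Algorithm~\texttt{Convex-}.

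The main obstacle I expect is the bookkeeping in two places. First, the exponent $1-1/t$ is dictated by balancing $k(s/n)^t = \Theta(1)$ against $k \cdot s = O(n)$ (so that $s \le n$ remains feasible and the disjoint family actually fits): one must check that with $k = \Theta(n)$ and the target $k(s/n)^t = \Theta(1)$, the forced value $s = \Theta(n^{1-1/t})$ is simultaneously consistent with $s \le n$ and with the hypotheses needed to control the pairwise (second-moment) terms --- this is the same delicate regime handled in Lemma~\ref{lem:sample}, so I would lean on that lemma rather than redo the function-maximization estimate. Second, one must be slightly careful that the obstacles are genuinely subsets of the ground set of size $\le t$ (not, say, geometric regions), so that "obstacle $\subseteq S$" is the right event; the statement quantifies obstacles by their \emph{size} $t$, which is precisely what makes the subset formulation legitimate. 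Modulo these two checks, the proof is a clean instance of the disjoint-witness-packing plus union-bound template already deployed twice in the paper.
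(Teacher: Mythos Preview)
The paper does not prove Theorem~\ref{thm:general}; it is quoted verbatim from Eppstein's book and used only as background for the discussion of Eppstein's question. There is therefore no ``paper's own proof'' to compare against.

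That said, your argument is essentially a correct proof of the cited result, and it is indeed the same template (disjoint-witness packing plus a sampling lemma) that the paper deploys for the convexity case. Two small points are worth tightening. First, the claim that $k(s/n)^t$ ``can be driven above any desired threshold by enlarging $c$, while the second-moment correction stays bounded'' is not quite right: the Boole--Bonferroni lower bound has the shape $x - x^2/2$ with $x \approx k(s/n)^t$, so pushing $x$ large makes the bound negative, not close to $1$. The correct move is the one you already list as an alternative---fix $c$ so that the capture probability is bounded away from zero (e.g., $\ge 1/4$ as in Lemma~\ref{lem:sample}) and then amplify to $1-p$ by independent repetition. Second, since the obstacles may have sizes strictly less than $t$, you should note explicitly that smaller obstacles are only easier to capture (or, equivalently, pad each $W_i$ to size exactly $t$ with arbitrary extra elements, which only lowers the capture probability); this makes the use of the exponent $t$ uniformly valid.
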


Recall that a \emph{sawtooth} configuration of $n$ points (where $n$ is a multiple of $4$)
is obtained by adding $n/2$ points very close to the midpoints of the $n/2$ sides of a regular $n/2$-gon
and interior to it~\cite[Definition 3.9]{Epp18}. It is known that a sawtooth configuration of $n$ points is $1/4$-close to
convex \ie, it can be made convex by removing a quarter, but not fewer, of its points; see, \eg, \cite[Observation~1.11]{Epp18}.
By Theorem~\ref{thm:general}, letting $t=4$ (by the witness structure), $\eps=1/4$, and $p=1/3$, 
indicates that a sample-based convexity testing algorithm with sample size $O(n^{3/4})$ achieves a false positive
rate at most $1/3$ for configurations that are $1/4$-far from convexity. 

Likely unaware of the work of Czumaj~\etal~\cite{CSZ-esa00,CS-esa01}, Eppstein asked the following natural
question~\cite[Open Problem 11.10]{Epp18}:
``Does the sample-based property testing algorithm for convexity, with sample size $O(n^{2/3})$,
achieve constant false positive rate, or is sample size $\Omega(n^{3/4})$ needed?''
Here achieving constant false positive rate means assuring that the false positive rate is bounded from above
by a constant. The machinery developed by Czumaj~\etal~for convexity testing
(this includes Lemmas 3.2 and 4.9 in~\cite{CSZ00}) and revisited here in Section~\ref{sec:main}
shows that a sample size $O(n^{2/3})$ suffices for that purpose and in general for any constant
$0 <\eps<1$ and $0<p<1$. This answers Eppstein's question.

\appendix

\section{Remarks on Lemma~3.4 in~\cite{CSZ00}}  \label{sec:appendix}

The following lemma is suggested in~\cite{CSZ00}.
Here we argue why the lemma cannot be used as is.

\begin{lemma} \label{lem:sample-old} {\rm \cite{CSZ00}}. 
  Let $\Omega$ be an arbitrary set set of $n$ elements. Let $k$ and $\ell$ be arbitrary integers
  (possibly dependent on $n$) and let $s$ be an arbitrary integer such that $s \geq 2n/(2k)^{1/\ell}$.
  Let $W_1,W_2,\ldots,W_k$ be arbitrary disjoint subsets of $\Omega$ each of size $\ell$.
  Let $W$ be a subset of $\Omega$ of size $s$ which is chosen independently and uniformly at random. 
  Then
  \[ \Prob(\exists j \in [k] \colon (W_j \subseteq W)) \geq \frac14. \]
\end{lemma}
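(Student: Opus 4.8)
The plan is to imitate the argument behind Lemma~\ref{lem:sample}. Since the event $\{\exists j\in[k]\colon W_j\subseteq W\}$ is monotone non-decreasing in the sample size $s$, it would suffice to prove the bound at the boundary value $s=\lceil 2n/(2k)^{1/\ell}\rceil$; already here one should check that the hypothesis is not vacuous, \ie that $2n/(2k)^{1/\ell}\le n$, which is equivalent to $(2k)^{1/\ell}\ge 2$, that is $k\ge 2^{\ell-1}$ --- a constraint the statement omits. Next I would estimate the probability of the union by second-order inclusion--exclusion: with $p:=\Prob(W_j\subseteq W)=\prod_{r=0}^{\ell-1}\frac{s-r}{n-r}$ and $q:=\Prob((W_i\cup W_j)\subseteq W)=\prod_{r=0}^{2\ell-1}\frac{s-r}{n-r}$ for $i\ne j$ --- the same hypergeometric products used in Lemma~\ref{lem:sample} --- the Boole--Bonferroni inequality gives $\Prob(\exists j\colon W_j\subseteq W)\ge kp-\binom{k}{2}q$.

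Since $s\le n$, each factor $\frac{s-r}{n-r}$ is non-increasing in $r$, so $p\ge\big(\tfrac{s-\ell}{n-\ell}\big)^{\ell}$ while $q/p=\prod_{r=0}^{\ell-1}\frac{(s-\ell)-r}{(n-\ell)-r}\le\big(\tfrac{s-\ell}{n-\ell}\big)^{\ell}$; writing $\beta:=k\big(\tfrac{s-\ell}{n-\ell}\big)^{\ell}$, the estimate becomes $\Prob(\exists j\colon W_j\subseteq W)\ge\beta\big(1-\tfrac12\beta\big)$. The quantity $\beta(1-\tfrac12\beta)$ exceeds $\tfrac14$ exactly when $\beta\in(1-\tfrac1{\sqrt2},\,1+\tfrac1{\sqrt2})$, so the remaining step would be to argue that the prescribed lower bound on $s$ pins $\beta$ into this window, using a rounding estimate of the form $1+x\le e^x$ to pass from the real solution of $\beta=\tfrac12$ to its ceiling.

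That last step is the main obstacle, and it is precisely where the lemma, stated for \emph{arbitrary} $k$ and $\ell$, breaks down. The threshold $s\ge 2n/(2k)^{1/\ell}$ forces $\big(\tfrac{s-\ell}{n-\ell}\big)^{\ell}\gtrsim\big(\tfrac{2}{(2k)^{1/\ell}}\big)^{\ell}=\tfrac{2^{\ell-1}}{k}$, hence $\beta\gtrsim 2^{\ell-1}$, which is already $\ge 2$ once $\ell\ge 2$; then $\binom{k}{2}q$ dominates $kp$, the quantity $\beta(1-\tfrac12\beta)$ becomes negative, and the Boole--Bonferroni bound says nothing. Together with the vacuity observed above (when $k<2^{\ell-1}$ the hypothesis already forces $s>n$), this shows the natural proof yields nothing in the claimed generality, which is why Lemma~\ref{lem:sample} instead uses the threshold $s_0=\ell+\frac{n-\ell}{(2k)^{1/\ell}}$ --- chosen so that $k\big(\tfrac{s_0-\ell}{n-\ell}\big)^{\ell}=\tfrac12$ exactly, keeping $\beta\le 1$ after rounding --- together with the restrictions $k\ge 10$, $3\le\ell\le n/32$, $n\ge 2^{10}$ that make $s_0\ge 3\ell\log k$ and allow $1+x\le e^x$ to be applied on the relevant interval.
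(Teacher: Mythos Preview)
Your proposal and the paper reach the same conclusion---that Lemma~\ref{lem:sample-old} as stated cannot be established by the intended Boole--Bonferroni argument and must be replaced by Lemma~\ref{lem:sample}---but the paper does \emph{not} actually prove this lemma. The appendix explicitly ``argue[s] why the lemma cannot be used as is,'' so there is no proof to compare your attempt against; rather, both you and the paper are critiquing the statement.

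The routes differ. You give a parametric analysis: the stated threshold $s\ge 2n/(2k)^{1/\ell}$ forces $\beta:=k\big(\tfrac{s-\ell}{n-\ell}\big)^{\ell}\approx 2^{\ell-1}$, so for $\ell\ge 2$ the second-order Bonferroni bound $\beta(1-\tfrac12\beta)$ is negative and yields nothing; you also note the vacuity constraint $k\ge 2^{\ell-1}$ needed for $s\le n$. The paper instead exhibits a concrete instance ($n=256$, $k=8$, $\ell=8$) where $\lceil 2n/(2k)^{1/\ell}\rceil=363>n$, and then shows that even switching to the ``intended'' threshold $s=\ell+\lceil(n-\ell)/(2k)^{1/\ell}\rceil$ (the one that makes $\beta=\tfrac12$ before rounding) gives $F_1\cdot F_2<\tfrac14$ because of the ceiling. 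Your argument explains \emph{why} the stated threshold is off by a factor of roughly $2$ in $s$ (equivalently $2^{\ell}$ in $\beta$); the paper's numerical example additionally isolates the rounding defect that persists even at the corrected threshold, which is what motivates the separate bounds $F_1\ge\tfrac12$ and $F_2\ge\tfrac12$ (rather than a product bound) and the extra hypotheses $k\ge 10$, $3\le\ell\le n/32$, $n\ge 2^{10}$ in Lemma~\ref{lem:sample}.

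One caution: your analysis shows the \emph{proof technique} fails, not that the probability itself drops below $\tfrac14$. When $\beta$ is large the union event is in fact very likely; it is only the second-order truncation that becomes uninformative. The paper is similarly careful to say the lemma ``cannot be used as is'' rather than that it is false.
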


We make two points:

\smallskip
(i) The first point is minor: taking $s$ as the smallest integer satisfying $s \geq 2n/(2k)^{1/\ell}$,
namely $s = \lceil 2n/(2k)^{1/\ell} \rceil$ may result in an integer larger than $n$ and thereby be infeasible.
For example, the setting $n=256$, $k=8$, $\ell=8$, yields $s = \lceil 2n/(2k)^{1/\ell} \rceil = 363>256$.

\smallskip
(ii) The second point requires attention.
Reading through the first few lines of the proof suggests that one could take
\begin{equation} \label{eq:s-old}
  s = \ell + \frac{n-\ell}{(2k)^{1/\ell}}, \text{ or } k \left(\frac{s-\ell}{n-\ell} \right)^\ell =\frac12.
\end{equation}

However, this value may be not an integer, and thereby be again infeasible. Suppose that
one takes instead the ceiling in the expression of $s$:
\begin{equation} \label{eq:s-old-ceil}
  s = \ell + \left \lceil \frac{n-\ell}{(2k)^{1/\ell}} \right \rceil. 
\end{equation}
For the above setting in (i), this yields
$s = 8 + \left \lceil \frac{248}{(16)^{1/8}} \right \rceil = 8 + 176 = 184$. 
Then the two factors that appear in the calculation of the lower bound on the probability
in question are

\begin{align*}
F_1 &=  k \cdot \left( \frac{s-\ell}{n-\ell} \right)^\ell  = 8 \cdot \left( \frac{176}{248} \right)^8 = 0.5147\ldots,\\
F_2 &= 1 - k \cdot  \left( \frac{s-\ell}{n-\ell} \right)^\ell  = 1 - 8 \cdot \left( \frac{176}{248} \right)^8 =0.4852\ldots
\end{align*}

It is now clear that $F_1 \cdot F_2 < \frac14$. Taking the floor does not work either.
The above example is not an exception, and this occurs whenever the value of $s$
in~\eqref{eq:s-old} is not an integer, which happens most of the time.


\begin{thebibliography}{99}

\bibitem{BCK+08}
  Mark~de~Berg, Otfried Cheong, Marc van Kreveld, and Mark~Overmars,
  \emph{Computational Geometry}, 3rd edition, Springer, Heidelberg, 2008.

\bibitem{BY22}
Arnab Bhattacharyya and Yuichi Yoshida,
\emph{Property Testing},
Springer Nature Singapore, 2022.

\bibitem{BLR93}
Manuel Blum, Michael Luby, and Ronitt Rubinfeld,
Self-testing/correcting with applications to numerical problems,
\emph{Journal of Computer and System Sciences}
\textbf{47(3)} (1993), 549--595.
 
\bibitem{Chan96}
Timothy M. Chan,
Output-sensitive results on convex hulls, extreme points, and related problems,
\emph{Discrete \& Computational Geometry}
\textbf{16(4)} (1996), 369--387.

\bibitem{Chaz93}
Bernard Chazelle,
An optimal convex hull algorithm in any fixed dimension,
\emph{Discrete \& Computational Geometry}
\textbf{10} (1993), 377--409.

\bibitem{CK80} Va\v{s}ek Chv\'atal and Gheza T. Klincsek,
Finding largest convex subsets,
\emph{Congressus Numerantium},
 \textbf{29} (1980), 453--460.

\bibitem{CSZ-esa00} 
Artur Czumaj, Christian Sohler, and Martin Ziegler,
Property testing in computational geometry (extended abstract),
in \emph{Proc. 8th Annual European Symposium on Algorithms} (ESA 2000), 
Springer, Heidelberg, vol. 1879 of LNCS, pp. 155--166.
\url{https://doi.org/10.1007/3-540-45253-2_15}. 

\bibitem{CSZ00} 
Artur Czumaj, Christian Sohler, and Martin Ziegler,
Testing convex position,
\url{https://www.researchgate.net/publication/228727099_Testing_Convex_Position}. 
Online manuscript (16 pages), accessed in April~2022.

\bibitem{CS-esa01} 
Artur Czumaj and Christian Sohler,
Property testing with geometric queries,
in \emph{Proc. 9th Annual European Symposium on Algorithms} (ESA 2001), 
Springer, Heidelberg, vol. 2161 of LNCS, pp. 266--277.
\url{https://doi.org/10.1007/3-540-44676-1_22}. 

\bibitem{EG89} 
Herbert Edelsbrunner and Leonidas J. Guibas,
Topologically sweeping an arrangement,
\emph{Journal of Computer and System Sciences}
 \textbf{38(1)} (1989), 165--194.

\bibitem{Epp18} 
  David Eppstein,
\emph{Forbidden Configurations in Discrete Geometry},
Cambridge University Press, 2018.

\bibitem{ES35}
  Paul Erd\H{o}s and George Szekeres,
  A combinatorial problem in geometry,
\emph{Compositio Mathematica} \textbf{2} (1935), 463--470.

\bibitem{EKK+00}
Funda Erg{\"u}n, Sampath Kannan, Ravi S. Kumar, Ronitt Rubinfeld, and Mahesh Viswanathan,
Spot-checkers,
\emph{Journal of Computer and System Sciences}
 \textbf{60(3)} (2000), 717--751.

\bibitem{GKW13}
  Panos Giannopoulos, Christian Knauer, and Daniel Werner,
On the computational complexity of {E}rd{\H{o}}s-{S}zekeres and related problems in {$\mathbb{R}^3$},
\emph{Proc.\ 21st European Symposium on Algorithms},
vol. 8125 of LNCS (2013), pp. 541--552.

\bibitem{Gol17}
Oded Goldreich,
\emph{Introduction to Property Testing},
Cambridge University Press, 2017.

\bibitem{HKSS18}
  Jie Han, Yoshiharu Kohayakawa, Marcelo T. Sales, and Henrique Stagni,
  Property testing for point sets on the plane,
\emph{Proc. of Latin American Symposium on Theoretical Informatics} (LATIN 2018),
Springer, vol. 10807 of LNCS, pp. 584--596.

\bibitem{KV03}
Gyula K{\'a}rolyi and Pavel Valtr,
Configurations in $d$-space without large subsets in convex position,
\emph{Discrete \& Computational Geometry}
\textbf{30(2)} (2003), 277--286.

\bibitem{Lov93}
L\'asl\'o Lov\'asz,
\emph{Combinatorial Problems and Exercises},
2nd~edition, Elsevier, Amsterdam, 1993.  

\bibitem{Mat02}
Ji\v{r}\'{\i} Matou\v{s}ek,
\emph{Lectures on Discrete Geometry},
Springer, New York, 2002.

\bibitem{MU17}
Michael Mitzenmacher and Eli Upfal,
\emph{Probability and Computing: Randomized Algorithms and Probabilistic Analysis},
2nd~edition, Cambridge University Press, 2017.

\bibitem{McM70}
Peter McMullen,
The maximal number of faces of a convex polytope.
\emph{Mathematika}
\textbf{17} (1970), 179--184.

\bibitem{NW78}
Albert Nijenhuis and Herbert S. Wilf,
\emph{Combinatorial Algorithms},
2nd~edition, Academic Press, New York, 1978.

\bibitem{PZ22}
  Cosmin Pohoata and Dmitrii Zakharov,
  Convex polytopes from fewer points,
  manuscript, August 2022.
 Preprint available at \url{arXiv.org/abs/2208.04878}.

\bibitem{RND77}
Edward M. Reingold, Jurg Nievergelt, and Narsingh Deo,
\emph{Combinatorial Algorithms: Theory and Practice},
Prentice-Hall, New Jersey, 1977.

\bibitem{Sei17}
Raimund Seidel,
Convex hull computations, 
Chap.~26 in \emph{Handbook of Discrete and Computational Geometry}
(Jacob E. Goodman, Joseph O’Rourke, and Csaba D. T\'oth, eds.), 3rd edition,
CRC Press, Boca Raton, 2017, pp.1057--1092.

\bibitem{Suk17} Andrew Suk,
On the {E}rd{\H o}s-{S}zekeres convex polygon problem,
\emph{Journal of the American Mathematical Society}
\textbf{30} (2017), 1047--1053.

\bibitem{Va92} Pavel Valtr,
Convex independent sets and 7-holes in restricted planar point sets,
\emph{Discrete \& Computational Geometry}
\textbf{7(2)} (1992), 135--152.

\bibitem {WS11} David P. Williamson and David B. Shmoys,
\emph{The Design of Approximation Algorithms},
Cambridge University Press, 2011.

\end{thebibliography}
\end{document}